\documentclass[11pt,reqno]{amsart}

\usepackage[T1]{fontenc}
\usepackage{lmodern}
\usepackage{microtype}
\usepackage{mathtools}
\usepackage{amssymb}
\usepackage{mathrsfs}
\usepackage{enumitem}
\usepackage{xcolor}
\usepackage[colorlinks=true,linkcolor=blue!55!black,citecolor=blue!55!black,urlcolor=blue!55!black]{hyperref}
\usepackage[nameinlink,capitalize,noabbrev]{cleveref}
\hypersetup{
  pdftitle={An n2 log log n Lower Bound for Permanent Circuits with Valid Division},
  pdfauthor={Jizhou Guo}
}

\newtheorem{theorem}{Theorem}[section]
\newtheorem{proposition}[theorem]{Proposition}
\newtheorem{lemma}[theorem]{Lemma}
\newtheorem{corollary}[theorem]{Corollary}
\theoremstyle{definition}

\theoremstyle{remark}
\newtheorem{remark}[theorem]{Remark}

\newcommand{\F}{\mathbb F}
\newcommand{\A}{\mathbb A}
\newcommand{\PP}{\mathbb P}
\newcommand{\Crit}{\operatorname{Crit}}
\newcommand{\permanent}{\operatorname{per}}
\newcommand{\rank}{\operatorname{rank}}
\newcommand{\length}{\operatorname{length}}

\newcommand{\Frac}{\operatorname{Frac}}

\newcommand{\im}{\operatorname{im}}

\title[Permanent Circuits with Valid Division]
{An $n^2\log\log n$ Lower Bound for Permanent Circuits with Valid Division}

\author{Jizhou Guo}
\address{Dots Studio, RedNote, Shanghai, China}
\email{mitsuha2021b@gmail.com}

\subjclass[2020]{68Q17, 14Q20, 13B40}
\keywords{permanent, arithmetic circuit, division, multiplicative complexity,
critical locus, affine B\'ezout}

\begin{document}

\begin{abstract}
We prove an $n^2\log\log n$ lower bound for rational arithmetic circuits
computing the permanent over characteristic zero.  Additions and scalar
operations are free, while every nonscalar multiplication or valid division
has unit cost.  If $L_{\rm div}(\permanent_n)$ denotes the resulting
complexity, then
\[
 \liminf_{n\to\infty}
 \frac{L_{\rm div}(\permanent_n)}{n^2\log_2\log_2 n}\ge \frac1{12}.
\]
The proof refines the critical-locus method for the permanent in two ways.
First, a column-deletion recurrence and the maximal-rank theorem for inclusion
matrices compress the shared coefficient space in the critical equations of
the matching-minor polynomial.  Second, a circuit-dependent formal
deformation transfers the resulting finite gradient slice through an
arbitrary valid division circuit.  Finite flatness preserves the length of the
special fiber, while a norm argument makes every divisor a unit on the generic
formal fiber.  Rational Baur--Strassen differentiation and affine B\'ezout
then give the lower bound with the stated constant.
\end{abstract}

\maketitle

\section{Introduction}

For an $n\times n$ matrix $Z=(z_{ij})$, its permanent is
\[
 \permanent_n(Z)=\sum_{\sigma\in S_n}\prod_{i=1}^n z_{i,\sigma(i)}.
\]
Obtaining strong lower bounds for arithmetic circuits computing this
polynomial is a central problem in algebraic complexity.  A recent
critical-locus argument gives a division-free lower bound of order
$n^2\log\log n$ by combining an affine specialization of the permanent, a
finite homogeneous gradient slice, Baur--Strassen differentiation, and
B\'ezout's theorem \cite{OpenAITenProofs2026}.  The construction retains a
quadratic number of variables and makes the degree grow logarithmically.

This paper proves that the same asymptotic order holds when arbitrary valid
division gates are allowed, and obtains the leading constant $1/12$ through a
sharper analysis of the matching-minor critical locus.

Fix an algebraically closed field $\F$ of characteristic zero.  We count only
nonscalar multiplications and divisions; additions and scalar operations have
zero cost.  A division gate is valid if its denominator is a nonzero element
of the ambient rational function field.  Let $L_{\rm div}(f)$ be the minimum
cost of a valid rational circuit representing $f$.

\begin{theorem}[Main theorem]\label{thm:main}
Over every field of characteristic zero,
\[
 \liminf_{n\to\infty}
 \frac{L_{\rm div}(\permanent_n)}
 {n^2\log_2\log_2 n}\ge \frac1{12}.
\]
\end{theorem}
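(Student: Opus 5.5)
The plan follows the Baur--Strassen/B\'ezout (Strassen degree-bound) template, with the permanent replaced by a well-chosen specialization whose critical locus is large.

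\emph{Specialization and reduction to a field.} Since $L_{\rm div}$ can only decrease under field extension and under affine substitution of variables, it suffices to work over an algebraically closed $\F$ of characteristic zero. There we replace $\permanent_n$ by an affine specialization $g$ in $N=\Theta(n^2)$ variables $x$. I would take $g$ to be a matching-minor polynomial of degree $d\approx\log_2 n$ in the $x$ variables, with the remaining rows and columns fixed to constants, so that $g$ is a sum of products of $d$ linear forms indexed by small matchings. The goal is then the following: for a generic choice of a shared coefficient vector, the critical system $\nabla g=c$ should have a finite scheme-theoretic solution set of length at least roughly $d^{\,N/3}$, up to lower-order factors. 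The column-deletion recurrence makes the partial derivatives $\partial g/\partial x_{ij}$ again matching minors of smaller size. The maximal-rank theorem for inclusion matrices (Gottlieb/Kantor/Wilson) shows that the linear map from the shared coefficient space onto the derivative span has full rank. Combining these, I expect to exhibit a finite gradient slice: a zero-dimensional subscheme $Z$ cut out by $N$ equations $\partial g/\partial x_k=\lambda_k$, of length $\ge d^{cN}$, after adjusting constants.

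\emph{Transfer through a division circuit.} Let $C$ be a valid rational circuit of cost $L$ computing $g$. Rational Baur--Strassen gives a circuit of cost $O(L)$ (at most $3L$ nonscalar gates would suffice here) computing all of $g,\partial_1 g,\dots,\partial_N g$. Introducing a new variable for each nonscalar gate $y_i=u_i\cdot v_i$ or $y_i v_i=u_i$ presents the graph of the gradient as the projection of an affine variety defined by $\le 3L$ equations of degree $\le 2$, intersected with $N$ linear equations. The difficulty is that the denominators $v_i$ may vanish on the special slice $Z$, so the points of $Z$ need not lift. To fix this, I would deform: replace $\lambda$ by $\lambda+t\mu$ over $\F[[t]]$, pick $\mu$ generic with respect to $C$, and let $\mathcal Z$ be the family. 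Finite flatness over $\F[[t]]$ implies that the generic fiber has the same length as $Z$. Each denominator $v_i$ is nonzero in the function field, so its norm down to $\F((t))$ is a nonzero element, hence a unit. Therefore every divisor is invertible on the generic formal fiber, and there the gate equations do lift all points. Affine B\'ezout (Heintz) then bounds the number of points by $2^{3L}$, up to the linear equations. Over the algebraic closure of $\F((t))$ this gives $2^{3L}\ge \length(Z)$.

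\emph{Counting.} From $3L\ge \log_2\length Z \gtrsim \tfrac{N}{c'}\log_2 d$ with $N\sim n^2$ and $d\sim\log_2 n$, we get $L\ge n^2\log_2\log_2 n/12$ asymptotically, provided the constants in the length estimate come out as $\length Z\ge d^{(1/4-o(1))n^2}$. Any loss from the rows and columns fixed during specialization is $o(n^2)$.

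\emph{Main obstacle.} I expect the hardest step to be the length estimate for the matching-minor critical scheme, meaning that it is finite with the sharp exponent. Here the inclusion-matrix rank argument must show that the derivatives are sufficiently independent. I would first try induction via column deletion, computing the length as a product of local intersection multiplicities along a degeneration to a monomial or product system.
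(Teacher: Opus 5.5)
Your ingredients match the paper's (column deletion, inclusion matrices, rational Baur--Strassen, a formal deformation with a norm argument, affine B\'ezout). There is a genuine gap, though: the step that produces the large finite scheme is missing, and you aim the combinatorics in the wrong direction. You propose to show directly that the critical system $\nabla g=\lambda$ in all $N$ variables is finite of length about $d^{cN}$, with the inclusion-matrix theorem certifying that the derivatives are ``sufficiently independent''. At $\lambda=0$ this fails outright. $\Crit(g)$ is a positive-dimensional cone: a matrix supported on $d-2$ rows kills every size-$(d-1)$ matching, hence every partial derivative, and blockwise this gives $\dim\Crit(P)\ge bs(d-2)$. For generic $\lambda$ the fiber, if finite, has as many points as the degree of the gradient map, and you give no means to estimate that. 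This is exactly the ``main obstacle'' you leave open.

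The paper never computes such a length. It uses the column-deletion recurrence and the disjointness matrices only for an \emph{upper} bound $\dim\Crit(P)\le\Delta$ (\Cref{prop:critical-bound}). After the deletion coefficients are replaced by $r_h=\rank D_{h,d-h}$ abstract coordinates, the symmetric-function identities make $\widetilde A$ finite over a polynomial ring of dimension $R_{t,d}+s(d-2)$; only $r_h\le\min\{\binom th,\binom t{d-h}\}$ is used. \Cref{lem:finite-slice} then converts dimension into length. Restrict $\nabla P$ to a general $k$-dimensional subspace, $k=m-\Delta$, meeting the cone $\Crit(P)$ only at $0$, and project generically to $\F^k$. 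This gives a homogeneous map $H_0$ of degree $d-1$ with $H_0^{-1}(0)=\{0\}$. Its components form a regular sequence, so its zero fiber has length exactly $(d-1)^k$. Without this conversion, your counting paragraph is reverse-engineered from the target constant.

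The parameter regime also matters. \Cref{prop:critical-bound} gives $\Delta=o(m)$ only with many small blocks:
\begin{itemize}
\item $b\approx n/(2t)$ disjoint $t\times s$ blocks;
\item $t\approx\log_2n-4\log_2\log_2n$, so that $R_{t,d}\le 2^t=o(n)$;
\item $d\approx t/\log_2\log_2n$, so that $(d-2)/t\to0$.
\end{itemize}
The blocks are glued into one permanent by the root-of-unity matrix $U$ of \Cref{prop:block-specialization}. For a single matching-minor polynomial with $ts=\Theta(n^2)$, $R_{t,d}$ is superpolynomial in $n$ and the bound says nothing. Taking $d$ comparable to $t$ (as your $d\approx\log_2 n$ suggests) makes $s(d-2)$ comparable to $ts$. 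The constant $1/4$ comes from keeping only $m=bts\approx n^2/4$ entries as variables, so the specialization loss is about $3n^2/4$, not $o(n^2)$.

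Finally, your reduction ``$L_{\rm div}$ can only decrease under affine substitution'' is not justified once division gates are allowed. The denominators of a circuit for $\permanent_n$ may vanish identically on the affine subspace $B+L(\F^m)$, so a valid circuit for $g$ of the same cost need not exist. Deforming the target value $\lambda$ cannot repair this, because it never moves $Z$ off that subspace. The paper instead deforms the specialization point, $Z=B+\varepsilon A+L(Wx)$, with $A$ chosen so that every numerator and denominator stays nonzero (\Cref{lem:transverse}). Only then does it run the flatness and norm argument over $\F[[\varepsilon,y_1,\ldots,y_k]]$.
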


The proof has two independent refinements.  The first is combinatorial.  If
$M_{t,s,d}$ is the sum of all size-$d$ permanental minors of a $t\times s$
variable matrix, we prove
\[
 \dim\Crit(M_{t,s,d})
 \le R_{t,d}+s(d-2),\qquad
 R_{t,d}:=\sum_{h=1}^{d-1}
 \min\left\{\binom th,\binom t{d-h}\right\}.
 \tag{1.1}\label{eq:intro-critical}
\]
The quantity $R_{t,d}$ is the exact sum of the ranks of certain disjointness
incidence matrices.  This replaces the full family of nonempty row-subset
parameters used in the original critical-locus bound.

The second refinement concerns division.  At the block specialization used to
construct the finite gradient slice, a rational circuit can have poles.  We
move the specialization in a circuit-dependent transverse direction
$B+\varepsilon A$.  The special finite fiber persists as a finite free family
over a complete regular local ring.  Every numerator and denominator occurring
in the differentiated circuit is nonzero in the total formal ring; its norm
is therefore nonzero, so it becomes a unit on the generic formal fiber.  This
places the entire conserved finite scheme inside the regular locus of the
rational gate graph, where the usual degree bound applies.

Our main quantitative statement is the following exact parameter form.

\begin{theorem}[Parameter theorem]\label{thm:parameters}
Let $b,t,s,d$ be integers such that
\[
 b\ge1,\qquad 3\le d\le\min\{t,s\},
\]
and set
\[
 n=bt+s-d,\quad m=bts,\quad
 \Delta=b\bigl(R_{t,d}+s(d-2)\bigr),\quad k=m-\Delta.
\]
If $k>0$, then
\[
 L_{\rm div}(\permanent_n)
 \ge \frac{k}{3}\log_2(d-1).
 \tag{1.2}\label{eq:parameter-bound}
\]
\end{theorem}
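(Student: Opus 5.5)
The plan is to follow the critical-locus template of \cite{OpenAITenProofs2026}, with the two refinements described in the introduction inserted at the appropriate points.

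\emph{Step 1: the specialization.} Given $n=bt+s-d$, specialize $\permanent_n$ to a block matrix. There are $b$ diagonal blocks of size $t\times s$ with fresh variables $x^{(\beta)}_{ij}$, giving $m=bts$ variables in total. The remaining rows and columns are padded with constants so that
\[
 \permanent_n\mapsto F=\sum_{\beta=1}^b M_{t,s,d}\bigl(x^{(\beta)}\bigr),
\]
up to harmless affine terms. The column count $s-d$ of padding is what absorbs the excess columns. A specialization costs nothing, so $L_{\rm div}(F)\le L_{\rm div}(\permanent_n)$.

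\emph{Step 2: the critical locus.} By (1.1), applied blockwise, we get $\dim\Crit(F)\le b(R_{t,d}+s(d-2))=\Delta$. Cut $\Crit(F)$ by $\Delta$ generic affine linear forms. This yields a finite homogeneous gradient slice. Equivalently, restricting $F$ to a generic affine subspace of dimension $k=m-\Delta$ gives a polynomial $G$ whose gradient system $\nabla G=0$ has a finite zero scheme. Since $F$ is a sum of forms of degree $d$ in disjoint variables, its Hessian is generically nondegenerate, so this scheme has length at least $(d-1)^k$. I expect to count this via the homogeneous B\'ezout count of $k$ equations of degree $d-1$, using a regularity argument at the block point $B$.

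\emph{Step 3: circuit transfer with division.} Take an optimal valid rational circuit $\Gamma$ of cost $L$ for $F$. Apply rational Baur--Strassen to obtain a circuit of cost at most $3L$ computing $F$ together with all $k$ partial derivatives of $G$. This step is the source of the factor $3$.

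Next, deform the specialization to $B+\varepsilon A$ with $A$ chosen depending on $\Gamma$. Finite flatness over $\F[[\varepsilon]]$ preserves the length $(d-1)^k$ of the special fiber on the generic formal fiber. The norm argument then makes every denominator of the differentiated circuit a unit there. On the regular locus of the gate graph, introduce a new variable and an equation for each nonscalar gate: a product $w=uv$, or a quotient $wv=u$. Each such equation has degree $2$. The graph of the differentiated circuit is thus cut out by at most $3L$ quadratic equations plus linear ones. The finite scheme $\{\nabla G=0\}$ over the generic fiber is a linear projection of a zero-dimensional component of this system. Affine B\'ezout therefore bounds its length by $2^{3L}$.

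\emph{Step 4: conclusion.} Combining the two counts gives $(d-1)^k\le 2^{3L}$, hence $L\ge \frac{k}{3}\log_2(d-1)$, which is (1.2).

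\emph{Main obstacle.} I expect Step 3 to be the hardest: guaranteeing that the conserved finite scheme avoids every pole of the differentiated circuit. The first thing I would try is to choose $A$ generic outside the finitely many hypersurfaces cut out by the denominators. Then I would show that the norm from the finite free algebra down to $\F[[\varepsilon]]$ of each denominator is nonzero, so each denominator is invertible after inverting $\varepsilon$.
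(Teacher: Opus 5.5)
Your outline has the same shape as the paper's proof, but it breaks at the pole-avoidance step, which is the heart of the division case.

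\textbf{The pole-avoidance step.} You work over $\F[[\varepsilon]]$ with the fiber $y=0$, that is, with $\F[[\varepsilon,x]]/(H_{\varepsilon,1},\ldots,H_{\varepsilon,k})$. This algebra is finite free of rank $(d-1)^k$ over $\F[[\varepsilon]]$, but it is not a domain. A generic choice of $A$ only guarantees that each denominator $g$ is nonzero in $\F[[\varepsilon,x]]$: it controls $g$ along $x=0$. It does not make $g$ a nonzerodivisor modulo $H_\varepsilon$. If $g$ vanishes on one branch, multiplication by $g$ has a kernel and its norm down to $\F[[\varepsilon]]$ is $0$.

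This happens for every $A$. Let $V_1$ be any irreducible component through $B$ of $\{Z : QL^{\mathsf T}\nabla_Z\permanent_n(Z)=0\}$. It has codimension exactly $k$, because it meets $B+\im(LW)$ only at $B$ near $B$. Hence it meets $B+\varepsilon A+\im(LW)$ along a branch tending to $B$. Now take the valid circuit computing $(g\cdot\permanent_n)/g$, with $0\ne g$ vanishing on $V_1$. Its differentiated circuit divides by $g$, and those branches are lost.

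The missing idea is to make the target value generic too. The paper takes $S=\F[[\varepsilon,y_1,\ldots,y_k]]$ with $y_i\mapsto H_{\varepsilon,i}(x)$. By miracle flatness, the domain $T=\F[[\varepsilon,x]]$ is finite free of rank $(d-1)^k$ over $S$. So any nonzero $g\in T$ is a nonzerodivisor, its norm is nonzero in $S$, and $g$ is a unit in $T\otimes_S\Frac(S)$. At this generic fiber the scheme is also reduced (characteristic zero), so you count $(d-1)^k$ distinct points. The gate-graph system then carries the affine output equations $H_\varepsilon(x)=y$.

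\textbf{Other problems.}
\begin{itemize}
\item In Step 1, $L_{\rm div}(F)\le L_{\rm div}(\permanent_n)$ is not free with division. A divisor of a circuit for $\permanent_n$ may vanish identically on $B+L(\F^m)$. This is why the paper never passes to a circuit for $F$. It differentiates the circuit for $\permanent_n$ and composes with $Z=B+\varepsilon A+L(Wx)$. Your deformation ``depending on $\Gamma$'' only makes sense for such a circuit, not for a circuit for $F$.
\item The specialization is not a diagonal-block pattern with affine junk terms. In general $bs$ variable columns do not fit into size $bt+s-d$. It is one $bt\times s$ variable matrix split into $b$ row blocks sharing the same columns. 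This is bordered by the root-of-unity matrix $U$ and $s-d$ all-ones rows. It yields exactly the homogeneous form $c\sum_h\lambda_hM_{t,s,d}(X^{(h)})$, and the slice lemma needs that homogeneity.
\item In Step 2, the length $(d-1)^k$ does not follow from Hessian nondegeneracy. Cutting $\Crit(F)$ by linear forms is also not equivalent to slicing the gradient. The paper instead takes $W$ with $\PP(\im W)$ disjoint from $\PP(\Crit P)$ and a general projection $Q$, so that $H_0^{-1}(0)=\{0\}$. The components of $H_0$ are then a regular sequence of $k$ forms of degree $d-1$, and the Hilbert series gives length $(d-1)^k$.
\end{itemize}
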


The rational differentiation estimate used here is the exact nonscalar
factor-three theorem of Baur and Strassen \cite{BaurStrassen1983}.  Our
formal-degeneration step is also naturally related to the extension of
algebraic-geometric degree bounds to approximative complexity studied by
Griesser \cite{Griesser1986}.  We give the circuit-specific argument because
it simultaneously records the nonlinear-gate constant and protects all poles
of the differentiated circuit.  Standard background on geometric degree
methods in algebraic complexity can be found in
\cite{BurgisserClausenShokrollahi1997,Heintz1983}.

The paper is organized as follows.  \Cref{sec:preliminaries} records the
geometric and circuit ingredients.  \Cref{sec:critical} proves
\eqref{eq:intro-critical}.  \Cref{sec:specialization} gives the
root-of-unity specialization of the permanent.  \Cref{sec:slice} constructs
the finite gradient slice.  \Cref{sec:deformation} treats arbitrary valid
division, and \Cref{sec:bezout,sec:optimization} complete the proof.

\section{Preliminaries}\label{sec:preliminaries}

All dimensions are Krull dimensions.  We freely extend the ground field to
its algebraic closure: a circuit over the original characteristic-zero field
remains a circuit of the same cost after extension.

\subsection{Matching-minor polynomials}

For a $t\times s$ matrix $X=(x_{ia})$ and $0\le d\le\min\{t,s\}$, define
\[
 M_{t,s,d}(X)
 =\sum_{\substack{I\subseteq[t],\ J\subseteq[s]\\|I|=|J|=d}}
   \permanent(X_{I,J}).
 \tag{2.1}\label{eq:matching-minor}
\]
Equivalently, this is the weight enumerator of size-$d$ matchings in the
complete bipartite graph $K_{t,s}$.  For a matrix $Y$ whose dimensions are
clear, we abbreviate the size-$\ell$ matching sum by $M_\ell(Y)$ and put
$M_0(Y)=1$.

For a polynomial $P\in\F[x_1,\ldots,x_m]$, write
\[
 \Crit(P)=V\!\left(\frac{\partial P}{\partial x_1},\ldots,
                    \frac{\partial P}{\partial x_m}\right)\subseteq\A^m.
\]

\subsection{Projective avoidance}

We use twice the elementary fact that two general projective linear spaces
avoid a fixed projective variety when the sum of their dimensions is smaller
than the ambient dimension.

\begin{lemma}[Finite gradient slice]\label{lem:finite-slice}
Let $P$ be homogeneous of degree $d\ge2$ in $m$ variables and suppose
$\dim\Crit(P)\le\Delta$.  If $1\le k\le m-\Delta$, there are a linear
injection $W:\F^k\to\F^m$ and a linear map $Q:\F^m\to\F^k$ such that
\[
 H_0(x):=Q\nabla P(Wx)
\]
is homogeneous of degree $e=d-1$ and satisfies $H_0^{-1}(0)=\{0\}$.
Moreover,
\[
 \length\frac{\F[[x_1,\ldots,x_k]]}
 {(H_{0,1},\ldots,H_{0,k})}=e^k.
 \tag{2.2}\label{eq:special-length}
\]
\end{lemma}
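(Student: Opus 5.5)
The plan is to pass to projective space and choose $W$ and $Q$ generically, in two rounds of avoidance. Since $\nabla P$ consists of $m$ forms of degree $e=d-1\ge1$, the critical locus $C=\Crit(P)\subseteq\A^m$ is a cone. Its projectivization $\PP(C)\subseteq\PP^{m-1}$ therefore has dimension at most $\Delta-1$ (or is empty).

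\emph{First avoidance.} Choose a general linear subspace $U=\im W\subseteq\F^m$ of dimension $k$. Then $\PP(U)\cong\PP^{k-1}$, and
\[
 (k-1)+(\Delta-1)<m-1
\]
because $k\le m-\Delta$. By the projective avoidance fact, a general $U$ meets $C$ only at $0$. Equivalently, the map $G:=\nabla P\circ W:\F^k\to\F^m$ consists of $m$ forms of degree $e$ with $G^{-1}(0)=\{0\}$.

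\emph{Second avoidance.} Consider the morphism
\[
 \PP^{k-1}\to\PP^{m-1},\qquad [x]\mapsto[G(x)],
\]
which is well defined because $G$ has no nontrivial zeros. It is finite: pulling back $\mathcal O(1)$ gives $\mathcal O(e)$, which is ample, so no curve is contracted. Hence its image $Y$ is closed of dimension $k-1$. Choose $Q:\F^m\to\F^k$ general and surjective; its kernel $K$ has $\dim\PP(K)=m-k-1$, and
\[
 (m-k-1)+(k-1)<m-1.
\]
So a general $K$ avoids $Y$. Then $H_0=Q\circ G$ vanishes only where $G(x)\in K$, that is, only where $G(x)=0$, hence only at $x=0$. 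Homogeneity of degree $e$ is immediate, and injectivity of $W$ holds for general $U$. The one point needing care is that, in the edge case $k=m-\Delta$ where only $\dim\le\Delta$ is known, both inequalities are still strict. They are, since each reduces to $k+\Delta\le m$.

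\emph{Length.} The forms $H_{0,1},\ldots,H_{0,k}$ are $k$ forms in $k$ variables with only the trivial common zero. By the Nullstellensatz the quotient $\F[x]/(H_0)$ is a finite-dimensional graded algebra supported at the origin, so $H_0$ is a homogeneous system of parameters, hence a regular sequence in the Cohen--Macaulay ring $\F[x]$. The Hilbert series of the quotient is therefore $\bigl((1-T^e)/(1-T)\bigr)^k$, and its value at $T=1$ gives
\[
 \dim_\F \F[x]/(H_0)=e^k.
\]
Since the quotient is local and Artinian, supported at the maximal ideal $(x)$, it coincides with its completion $\F[[x]]/(H_0)$. This yields \eqref{eq:special-length}. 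Alternatively, one can invoke B\'ezout for $k$ hypersurfaces of degree $e$ meeting only at the origin.

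The main step requiring justification is the finiteness of the projectivized gradient map, which is what gives $\dim Y=k-1$. The ampleness argument above handles it; equivalently, every fiber of a morphism from a projective variety is closed, and any positive-dimensional fiber would be contracted by a map whose pullback of $\mathcal O(1)$ is ample, which is impossible.
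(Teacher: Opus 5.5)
Your proof is correct and follows the paper's argument essentially step for step: a general $(k-1)$-plane avoiding $\PP(\Crit(P))$, then a general codimension-$k$ projective kernel avoiding the projectivized image of the restricted gradient, then the Hilbert series of a regular sequence of $k$ forms of degree $e$, with completion leaving the length unchanged. The finiteness of $[x]\mapsto[G(x)]$ that you single out as the main step is not actually needed, because the avoidance count only uses $\dim Y\le k-1$, and this holds for the image of any morphism from $\PP^{k-1}$; also, your second inequality $(m-k-1)+(k-1)<m-1$ holds unconditionally rather than via $k+\Delta\le m$.
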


\begin{proof}
Because $\Crit(P)$ is a cone, a general projective $(k-1)$-plane is disjoint
from $\PP(\Crit(P))$ when $k+\Delta\le m$.  Choose $W$ with this property.
The restricted gradient is nonzero away from the origin.  Its projective
image has dimension at most $k-1$, so a general projective kernel of
codimension $k$ avoids that image.  The corresponding quotient map $Q$ gives
$H_0^{-1}(0)=\{0\}$.

The components of $H_0$ form a homogeneous system of parameters of common
degree $e$ in $\F[x_1,\ldots,x_k]$, hence a regular sequence.  Its Hilbert
series is $(1+z+\cdots+z^{e-1})^k$, whose value at $z=1$ is $e^k$.
Completion does not change this length.
\end{proof}

\subsection{Rational differentiation}

\begin{lemma}[Baur--Strassen]\label{lem:baur-strassen}
If a valid rational circuit of nonscalar cost $L$ computes a rational function
$f$, then all first derivatives of $f$ are computed simultaneously by a valid
rational circuit of cost at most $3L$.
\end{lemma}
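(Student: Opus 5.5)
The plan is the standard reverse-mode (backpropagation) argument, adapted to division gates. I will do induction on the number of nonscalar gates, in the Baur--Strassen form. Write the circuit as a straight-line program
\[
 g_i=u_i\circ v_i,\qquad i=1,\ldots,L,
\]
where $\circ\in\{\times,/\}$ and each operand $u_i,v_i$ is an affine combination, over $\F$, of the inputs and the earlier $g_j$. Additions and scalar multiplications are absorbed into these affine forms at zero cost. The inductive claim is slightly stronger than the statement: if $f$ is computed with $L$ nonscalar gates, then $f$ together with $\partial f/\partial x_1,\ldots,\partial f/\partial x_m$ is computed by a valid circuit of cost at most $3L$.

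For the induction I regard the first gate $g_1$ as a new variable $y$. The remaining $L-1$ gates then compute a rational function $\tilde f(x,y)$, and
\[
 f(x)=\tilde f\bigl(x,g_1(x)\bigr).
\]
By induction, $\tilde f$ and all its partials $\partial\tilde f/\partial x_j$ and $\partial\tilde f/\partial y$ are computed at cost $\le 3(L-1)$. One point needs care here: validity. After the substitution $y=g_1(x)$, every denominator of the new circuit is the image of a denominator of the original circuit, so it is nonzero in $\F(x)$. The inductive circuit for $\tilde f$ may introduce extra denominators, but these are only the original ones, reused in the derivative formulas. I will therefore carry the invariant that every division in the derived circuit divides by some $v_i$, or by a product or power of the $v_i$, so validity follows from validity of the original circuit.

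The chain rule gives
\[
 \frac{\partial f}{\partial x_j}=\frac{\partial\tilde f}{\partial x_j}+\frac{\partial\tilde f}{\partial y}\,\frac{\partial g_1}{\partial x_j}.
\]
For a product $g_1=uv$, the derivative is $\partial g_1/\partial x_j=u\,\partial_j v+v\,\partial_j u$, where $\partial_j u$ and $\partial_j v$ are scalars because $u,v$ are affine in $x$. So it suffices to compute $c_u=(\partial\tilde f/\partial y)\,v$ and $c_v=(\partial\tilde f/\partial y)\,u$, which costs 2 multiplications. Each $\partial f/\partial x_j$ is then a linear combination of $c_u$, $c_v$ and $\partial\tilde f/\partial x_j$. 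With the gate $g_1$ itself this totals $1+2=3$.

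For a division $g_1=u/v$, I use
\[
 \partial_j g_1=\frac{\partial_j u-g_1\,\partial_j v}{v}.
\]
So I compute $w=(\partial\tilde f/\partial y)/v$, which is one division by the valid denominator $v$, and then $w\cdot g_1$, which is one multiplication. Each $\partial f/\partial x_j$ is a linear combination of $\partial\tilde f/\partial x_j$, $w$ and $wg_1$, again for a total of 3 including the gate. In the product case, a factor of $\partial\tilde f/\partial y$ that is a scalar only saves cost.

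The step I expect to be the main obstacle is making the induction precise when substituting $y=g_1(x)$ into rational expressions. The derivatives of $\tilde f$ live in $\F(x,y)$, and I must check that evaluation at $y=g_1$ is well defined. That is, no denominator of the derived circuit for $\tilde f$ vanishes identically on the graph $y=g_1(x)$. I will handle this by tracking denominators explicitly, as promised above: they are the $v_i$ of the original gates, which after substitution are exactly the original nonzero denominators. The specialization map from the subring of $\F(x,y)$ where these are units to $\F(x)$ is then a ring homomorphism, and it commutes with the chain-rule identities.

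The remaining steps are routine bookkeeping that scalar and affine operations are free.
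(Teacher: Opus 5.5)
Your proposal is correct and is essentially the paper's argument: the per-gate accounting (the forward gate plus the two products $\bar w v$ and $\bar w u$ for a multiplication; the forward division, the reverse division $\bar w/v$, and one product with the gate output for a division) is exactly what the paper uses, and your induction on the first gate is the original Baur--Strassen packaging of the same reverse sweep. Your invariant that every new division is by an original denominator $v_i$, together with the specialization map on the ring where these are units, makes explicit the validity check that the paper leaves implicit.
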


\begin{proof}
This is the nonscalar form of \cite[Theorem~2]{BaurStrassen1983}.  The constant
is visible gate by gate.  A multiplication $w=uv$ costs one forward
multiplication and the reverse updates
$\bar u\mathrel{+}=\bar wv$, $\bar v\mathrel{+}=\bar wu$ cost two more.  For
$w=u/v$, reuse $q=\bar w/v$; the forward division, this reverse division, and
the multiplication in $\bar v\mathrel{-}=qw$ again cost three in total.
All additions and scalar operations are free.
\end{proof}

\section{The critical locus of the matching-minor polynomial}
\label{sec:critical}

Fix $3\le d\le\min\{t,s\}$ and put $q=d-1$.  For $S\subseteq[t]$, let
$X_{\widehat S}$ denote deletion of the rows in $S$; a hat over a column has
the analogous meaning.

\subsection{A column-deletion recurrence}

Fix a column $a$ and write $u_i=x_{ia}$.  Set
\[
 F_{\ell,S}=M_\ell(X_{\widehat S}),\qquad
 G_{\ell,S}=M_\ell(X_{\widehat S,\widehat a}).
\]
Splitting a matching according to whether it uses column $a$ gives
\[
 F_{\ell,S}=G_{\ell,S}
 +\sum_{j\notin S}u_jG_{\ell-1,S\cup\{j\}}.
 \tag{3.1}\label{eq:deletion-basic}
\]
Let $U$ denote the second operator on the right.  It is nilpotent on the
finite degree range, and ordered choices of $r$ distinct rows give
\[
 (U^rF)_{\ell,S}
 =r!\!\sum_{\substack{J\subseteq[t]\setminus S\\|J|=r}}
 u_JF_{\ell-r,S\cup J},\qquad u_J:=\prod_{j\in J}u_j.
 \tag{3.2}\label{eq:deletion-power}
\]
Inverting $I+U$ in \eqref{eq:deletion-basic} and using
$\partial_{ia}M_{t,s,d}=G_{q,\{i\}}$ yields
\[
 \frac{\partial M_{t,s,d}}{\partial x_{ia}}
 =\sum_{r=0}^{q}(-1)^rr!
 \sum_{\substack{J\subseteq[t]\setminus\{i\}\\|J|=r}}
 u_JM_{q-r}(X_{\widehat{\{i\}\cup J}}).
 \tag{3.3}\label{eq:derivative-recurrence}
\]
The final term is $(-1)^qq!e_q(u_1,\ldots,\widehat u_i,\ldots,u_t)$.

\subsection{The shared coefficient spaces}

At deletion size $h$, the coefficients in
\eqref{eq:derivative-recurrence} have the form
\[
 A_S=M_{d-h}(X_{\widehat S}),\qquad |S|=h.
 \tag{3.4}\label{eq:deletion-vector}
\]
For each $(d-h)$-subset $I$ of rows, let $w_I$ be the sum of the weights of
all size-$(d-h)$ matchings using precisely the row set $I$.  Then
\[
 A_S=\sum_{I\cap S=\varnothing}w_I.
 \tag{3.5}\label{eq:disjointness-map}
\]
Thus the vector $(A_S)_{|S|=h}$ belongs to the image of the disjointness
matrix $D_{h,d-h}$.  Complementing its column subsets and transposing turns it
into a higher inclusion matrix.  Gottlieb's maximal-rank theorem
\cite{Gottlieb1966} therefore gives, in characteristic zero,
\[
 \rank D_{h,d-h}
 =\min\left\{\binom th,\binom t{d-h}\right\}.
 \tag{3.6}\label{eq:incidence-rank}
\]

\begin{proposition}[Critical-locus bound]\label{prop:critical-bound}
For $3\le d\le\min\{t,s\}$,
\[
 \dim\Crit(M_{t,s,d})\le R_{t,d}+s(d-2),
 \quad
 R_{t,d}=\sum_{h=1}^{d-1}
 \min\left\{\binom th,\binom t{d-h}\right\}.
 \tag{3.7}\label{eq:critical-bound}
\]
\end{proposition}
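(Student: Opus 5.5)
We bound $\dim\Crit(M_{t,s,d})$ column by column via a fibration argument. Fix a column $a$. By \eqref{eq:derivative-recurrence}, for each row $i$ the equation $\partial M_{t,s,d}/\partial x_{ia}=0$ reads
\[
 \sum_{r=0}^{q}(-1)^rr!\sum_{\substack{J\subseteq[t]\setminus\{i\}\\|J|=r}}u_J\,A_{\{i\}\cup J}=0,
\]
with the convention $A_S=M_{d-|S|}(X_{\widehat S})$ for $|S|=h=r+1$, and $A_S=1$ when $|S|=d$. The coefficients $A_S$ with $1\le h\le d-1$ do not involve column $a$ only in their $a$-entries; what matters is that they are all determined by the vectors $(A_S)_{|S|=h}$, $h=1,\dots,d-1$. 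Consider the morphism
\[
 \Phi:\Crit(M_{t,s,d})\to \prod_{h=1}^{d-1}\im D_{h,d-h},\qquad X\mapsto\bigl((A_S(X))_{|S|=h}\bigr)_{h},
\]
which lands in the stated product by \eqref{eq:disjointness-map}. By \eqref{eq:incidence-rank} the target has dimension $R_{t,d}$. It then suffices to show every fiber of $\Phi$ has dimension at most $s(d-2)$.

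Fix a fiber, so all $A_S$ are constants $c_S$. Crucially, the $A_S$ are the same for every column $a$, since they are defined by row deletions of the full matrix. Hence for every column $a$ the column vector $u^{(a)}=(x_{ia})_i$ satisfies the same system of $t$ polynomial equations
\[
 E_i(u):=\sum_{r=0}^{q}(-1)^rr!\sum_{|J|=r,\ i\notin J}u_J\,c_{\{i\}\cup J}=0,\qquad i\in[t],
\]
whose top-degree part is $(-1)^qq!\,e_q(u_1,\dots,\widehat u_i,\dots,u_t)$. The fiber therefore embeds into $V(E_1,\dots,E_t)^s\subseteq(\A^t)^s$, and it suffices to show $\dim V(E)\le d-2=q-1$ for every choice of constants.

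For this we pass to the projective closure. The dimension of an affine variety is at most one more than the dimension of its set of points at infinity, which lies in the projective variety cut out by the top-degree parts $e_q(u_{\hat i})$, $i\in[t]$. A point $u\neq0$ with $e_q(u_{\hat i})=0$ for all $i$ satisfies $e_q(u)=u_ie_{q-1}(u_{\hat i})$ for every $i$ with $u_i\ne0$. Using the identities $e_q(u)=e_q(u_{\hat i})+u_ie_{q-1}(u_{\hat i})$ and induction, this forces $u$ to have at most $q-1$ nonzero coordinates. Indeed, if $N\ge q$ coordinates are nonzero, restrict to them; then all $e_q(u_{\hat i})=0$, and summing gives $(N-q)e_q=0$ together with $e_{q-1}$-type relations. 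These force all elementary symmetric functions of the nonzero coordinates to vanish, which would make the coordinates zero, a contradiction. The common zero locus at infinity is then the union of coordinate $(q-1)$-planes, which has projective dimension $q-2$. Hence $\dim V(E)\le q-1=d-2$.

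The main obstacle is this last vanishing argument. One must check carefully that the top forms $e_q(u_{\hat i})$ have common zero set exactly the vectors of support less than $q$, including when the number of nonzero coordinates equals $q$. In that case $e_q(u_{\hat i})=0$ is automatic for $i$ in the support only if $u_i\ne0$ is the omitted coordinate, and the complementary $i$ give $e_q(u)=0$, impossible for $q$ nonzero entries. Combining the pieces gives $\dim\Crit\le R_{t,d}+s(d-2)$.
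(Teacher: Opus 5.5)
Your proof is correct. It has the same skeleton as the paper's: the deletion coefficients $A_S$ are shared by all columns and range over a space of dimension $R_{t,d}$ by \eqref{eq:incidence-rank}, and once they are fixed each column contributes at most $d-2$. What differs is the per-column step and how dimensions are added.

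The paper never freezes the coefficients. It makes them free variables $z_{h,j}$ in the ring $\widetilde A$ of \eqref{eq:incidence-ring}. Filtering by $u$-degree and using \eqref{eq:symmetric-identities}, it shows that $\widetilde A$ is a finite module over the polynomial ring $B$ of \eqref{eq:parameter-ring}, with $e_1(u^{(a)}),\dots,e_{q-1}(u^{(a)})$ as explicit parameters for each column. The bound then follows from finiteness alone, with no fiber-dimension theorem and no projective closure.

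You instead fiber $\Crit(M_{t,s,d})$ over the coefficient space. You then bound each column variety $V(E_c)$ through its points at infinity, identifying the common zeros of the top forms $e_q(u_{\hat i})$ exactly as the vectors of support at most $q-1$. This is a sharper statement than the paper needs. The paper uses only that these forms together with $e_1,\dots,e_{q-1}$ vanish solely at $0$, which your description immediately implies. Your version shows geometrically where $d-2$ comes from, at the cost of invoking the fiber-dimension inequality and the asymptotic-cone bound.

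When you write up the support computation, make the descent explicit. Let $v$ be the $N>q$ nonzero coordinates. If $e_j(v_{\hat i})=0$ for all $i$, summing gives $(N-j)e_j(v)=0$, so $e_j(v)=0$. Then $e_j(v)=e_j(v_{\hat i})+v_ie_{j-1}(v_{\hat i})$ with $v_i\neq0$ gives $e_{j-1}(v_{\hat i})=0$. Reaching $j=0$ contradicts $e_0=1$. Your phrase ``all elementary symmetric functions vanish, which would make the coordinates zero'' should be read this way. Only $e_1(v),\dots,e_q(v)$ are forced to vanish, and then the first identity in \eqref{eq:symmetric-identities} gives $v_i^q=0$.
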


\begin{proof}
For every $h$, choose $r_h=\rank D_{h,d-h}$ coordinates
$z_{h,1},\ldots,z_{h,r_h}$ on an abstract copy of $\im D_{h,d-h}$, and use
them to reconstruct the vector in \eqref{eq:deletion-vector}.  Replace every
deletion coefficient in \eqref{eq:derivative-recurrence} by this abstract
linear reconstruction, obtaining equations $Q_{ia}(u^{(a)};z)$.

Let
\[
 \widetilde A=
 \F[z_{h,j},u_i^{(a)}]/(Q_{ia}:i\in[t],a\in[s]).
 \tag{3.8}\label{eq:incidence-ring}
\]
Substituting the actual deletion vectors and matrix columns defines a
surjection from $\widetilde A$ onto the coordinate ring of
$\Crit(M_{t,s,d})$.

Introduce the parameter ring
\[
 B=\F[z_{h,j}]
 [y_{a,j}:a\in[s],\ 1\le j<q]
 \tag{3.9}\label{eq:parameter-ring}
\]
and map $y_{a,j}$ to $e_j(u^{(a)})$.  Filter the presentation of
$\widetilde A$ over $B$ by total degree in the $u$ variables, with $B$ in
degree zero.  Equation \eqref{eq:derivative-recurrence} expresses every
$e_q(u^{(a)}\setminus u_i^{(a)})$ in filtered degree at most $q-1$.

For one column $u=(u_1,\ldots,u_t)$, use
\[
 e_q(u\setminus u_i)=\sum_{j=0}^{q}(-u_i)^{q-j}e_j(u),
 \qquad
 \sum_{i=1}^{t}e_q(u\setminus u_i)=(t-q)e_q(u).
 \tag{3.10}\label{eq:symmetric-identities}
\]
Since $t-q\ne0$, the second identity first reduces $e_q(u)$ to filtered
degree at most $q-1$.  The first then reduces each $u_i^q$ to smaller filtered
degree.  Repeated reduction shows that $\widetilde A$ is spanned over $B$ by
the finitely many monomials
\[
 \prod_{i,a}(u_i^{(a)})^{\nu_{ia}},\qquad 0\le\nu_{ia}<q.
\]
Hence $\widetilde A$ is finite over $B$.  It follows from
\eqref{eq:incidence-rank} that
\[
 \dim\Crit(M_{t,s,d})
 \le\dim\widetilde A\le\dim B
 =R_{t,d}+s(q-1),
\]
which is \eqref{eq:critical-bound}.
\end{proof}

\begin{corollary}[Disjoint blocks]\label{cor:block-critical}
Let $X^{(1)},\ldots,X^{(b)}$ be disjoint $t\times s$ variable blocks and
$\lambda_1,\ldots,\lambda_b\in\F^\times$.  Then
\[
 \dim\Crit\!\left(\sum_{h=1}^{b}\lambda_hM_{t,s,d}(X^{(h)})\right)
 \le b\bigl(R_{t,d}+s(d-2)\bigr).
\]
\end{corollary}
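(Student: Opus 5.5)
The plan is to use the fact that the critical locus of a sum of polynomials in disjoint variable sets is the product of their individual critical loci, and then to apply \cref{prop:critical-bound} to each factor.

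Put $P=\sum_{h=1}^{b}\lambda_hM_{t,s,d}(X^{(h)})$. The variables of $P$ split into the disjoint blocks $X^{(1)},\ldots,X^{(b)}$. For a variable $x$ in block $h$, every summand with index $h'\ne h$ is constant in $x$, so
\[
 \frac{\partial P}{\partial x}=\lambda_h\frac{\partial M_{t,s,d}(X^{(h)})}{\partial x}.
\]
Because $\lambda_h\in\F^\times$, the two derivatives generate the same ideal. Consequently the gradient ideal of $P$ is the sum of the extended ideals $I_h\F[X^{(1)},\ldots,X^{(b)}]$, where $I_h$ is the gradient ideal of $M_{t,s,d}(X^{(h)})$ in the variables of block $h$ alone.

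It follows that the coordinate ring of $\Crit(P)$ is
\[
 \bigotimes_{h=1}^{b}\F[X^{(h)}]/I_h,
\]
a tensor product over $\F$. Geometrically, $\Crit(P)=\prod_{h}\Crit(M_{t,s,d})$ as a subset of $\prod_h\A^{ts}$.

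Over the algebraically closed field $\F$, the Krull dimension of a tensor product of finitely generated $\F$-algebras equals the sum of the dimensions of the factors; equivalently, a product of affine varieties has dimension equal to the sum of their dimensions. Hence $\dim\Crit(P)=b\dim\Crit(M_{t,s,d})$.

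Applying \cref{prop:critical-bound} to each factor gives the claimed bound $b\bigl(R_{t,d}+s(d-2)\bigr)$. No step should present a real obstacle. The only point needing a citation is additivity of dimension under fiber products over a field, and radicals can be ignored since dimension depends only on the underlying set.
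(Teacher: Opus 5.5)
Your proposal is correct and takes the same approach as the paper: the critical scheme of a sum over disjoint variable blocks is the product of the blockwise critical schemes (the nonzero scalars $\lambda_h$ leave the gradient ideals unchanged). Dimensions add under this product, and \cref{prop:critical-bound} bounds each factor.
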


\begin{proof}
The critical scheme is the product of the critical schemes of the disjoint
summands, and dimensions add.
\end{proof}

\section{A root-of-unity specialization of the permanent}
\label{sec:specialization}

We include the block construction from \cite{OpenAITenProofs2026} in the
notation needed below.  Partition $[r]$ into blocks
$R_1,\ldots,R_b$ of size $t$, where $r=bt$, and choose a primitive $d$th root
of unity $\zeta$.  Put $\theta=(-1)^{d+1}2^d$.  Form an
$r\times(r-d)$ constant matrix $U$ from the following columns:
\begin{enumerate}[label=(\roman*)]
\item for each $h\in[b]$, take $t-d$ copies of the indicator vector
      $\mathbf1_{R_h}$;
\item for $2\le h\le b$ and $0\le j<d$, take
\[
 \mathbf1_{R_1\cup\cdots\cup R_{h-1}}+2\zeta^j\mathbf1_{R_h}.
\]
\end{enumerate}
For an $r\times s$ variable matrix $X$, define the square matrix
\[
 \mathcal B(X)=
 \begin{pmatrix}X&U\\ \mathbf1&0\end{pmatrix},
 \tag{4.1}\label{eq:block-matrix}
\]
where the lower-left all-ones block has $s-d$ rows.  Thus $\mathcal B(X)$ has
size $r+s-d$.

\begin{proposition}[Block specialization]\label{prop:block-specialization}
There are nonzero $\lambda_1,\ldots,\lambda_b\in\F$ such that
\[
 \permanent(\mathcal B(X))
 =c\sum_{h=1}^{b}\lambda_hM_{t,s,d}(X_{R_h,[s]}),
 \quad
 c=(s-d)!(t-d)!(t!)^{b-1}\ne0.
 \tag{4.2}\label{eq:block-specialization}
\]
\end{proposition}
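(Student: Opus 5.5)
We expand $\permanent(\mathcal B(X))$ by Laplace expansion along the last $s-d$ rows, which form the all-ones block $(\mathbf1\ 0)$. These rows can only be matched to columns of $X$. Every injection of the $s-d$ bottom rows into a set $J^c\subseteq[s]$ with $|J^c|=s-d$ contributes $1$, so we get the factor $(s-d)!$ for each choice of $J$, $|J|=d$. Hence
\[
 \permanent(\mathcal B(X))=(s-d)!\sum_{|J|=d}\permanent\bigl(X_{[r],J}\ \big|\ U\bigr).
\]
Next we expand each such permanent by choosing the set $I\subseteq[r]$, $|I|=d$, of rows matched to the columns $J$ of $X$. The contribution is $\permanent(X_{I,J})\cdot\permanent(U_{[r]\setminus I,\,\cdot})$. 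The claim therefore reduces to a purely combinatorial statement about $U$: the quantity $c_I:=\permanent(U_{[r]\setminus I})$ vanishes unless $I\subseteq R_h$ for a single block $h$, and for $I\subseteq R_h$ it equals $(t-d)!(t!)^{b-1}\lambda_h$ with $\lambda_h\ne0$ independent of the particular $I$. Summing $\permanent(X_{I,J})$ over $I\subseteq R_h$ and $J$ then produces $M_{t,s,d}(X_{R_h,[s]})$, which gives \eqref{eq:block-specialization}.

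To compute $c_I$, let $a_h=|I\cap R_h|$, so $\sum a_h=d$ and row $R_h$ keeps $t-a_h$ rows. A permanent of a matrix whose rows are constant on blocks depends only on the block row counts. The entries of each column are constant on each block. The type (i) columns for block $h$ are supported on $R_h$, and each type (ii) column $(h,j)$ takes value $1$ on $R_{<h}$, $2\zeta^j$ on $R_h$, and $0$ on $R_{>h}$. We count assignments block by block, starting from the last block $R_b$. The rows of $R_b$ can only be covered by the $t-d$ columns of type (i) for $b$ and the $d$ columns $(b,j)$. The remaining $t-a_b$ rows of $R_b$ must use the $t-d$ indicator columns together with some $d-a_b$ of the columns $(b,j)$. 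The leftover $a_b$ columns $(b,j)$ must then be absorbed by the earlier blocks, with weight $1$. Organizing this, the permanent becomes $(t-a_b)!$ times a sum over subsets $T\subseteq\mathbb Z/d$ with $|T|=d-a_b$ of $\prod_{j\in T}2\zeta^j$, times a recursive term. That sum equals $2^{d-a_b}e_{d-a_b}(1,\zeta,\dots,\zeta^{d-1})$, which vanishes unless $a_b\in\{0,d\}$, since the elementary symmetric functions of all $d$th roots of unity vanish in degrees $1,\dots,d-1$. In the case $a_b=d$ we get $(-1)^{d+1}\cdot$ (the constant from $e_d$, which equals $(-1)^{d+1}$), and the factor $2^d$ explains the choice $\theta=(-1)^{d+1}2^d$. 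Inducting downward over blocks, this forces $I$ to lie in one block. The nonvanishing term factorizes into factorials $t!$ for the untouched blocks, $(t-d)!$ for the block containing $I$, and a product of powers of $\theta$ and ones. This product depends only on $h$, and we define it to be $\lambda_h$, which is nonzero.

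The main obstacle is making the recursive bookkeeping precise. When $a_b=0$, the $d$ columns $(b,j)$ must all be covered, some by rows of $R_b$ (weight $2\zeta^j$) and the rest by earlier rows (weight $1$). When $a_b=d$, all of them pass to earlier blocks. I expect to handle this by first proving a lemma, by induction on $b$, for the block-constant matrix with prescribed row counts $t-a_h$. The lemma will state that the permanent equals $\prod_h(\text{factorial})$ times $\prod$ of elementary symmetric sums of roots of unity, together with multinomial counts of how the surplus columns distribute among earlier rows. The delicate point is checking that the mixed sums over roots of unity, where a column $(h,j)$ is covered partly by $R_h$ and partly by earlier blocks, still collapse. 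This happens because the choice of which $\zeta^j$ go to $R_h$ ranges over all subsets of a fixed size.
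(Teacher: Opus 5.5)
Your reduction to the minors $c_I=\permanent(U_{[r]\setminus I})$ by expanding along the bottom $s-d$ rows is the paper's. Your computation of $c_I$ takes a different route. You peel off the last block and sum over which columns $(h,j)$ its rows cover, using $e_p(1,\zeta,\ldots,\zeta^{d-1})=0$ for $0<p<d$. The paper instead multiplies the column forms of $U$ in $\F[z_1,\ldots,z_r]/(z_i^2)$. There the identity $\prod_j(Y+2\zeta^jy)=Y^d+\theta y^d$ packages the same vanishing, and the nilpotency $y_h^{t+1}=0$ does all the extra-column bookkeeping that you defer to a lemma. The paper's version yields every minor of $U$ at once. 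Yours is more elementary but must track all-ones columns passed down between blocks.

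Two points in your sketch are wrong. First, the cases are swapped. The case $a_b=d$ gives $2^0e_0=1$ and the factor $(t-d)!$; it is $a_b=0$ that gives $2^de_d=\theta$.

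Second, and more substantively, the surviving value is not a product of powers of $\theta$ and ones. Let $N_g(I,e)$ be the permanent on rows $R_{\le g}\setminus I$, using the $g$-block columns plus $e$ extra columns $\mathbf 1_{R_{\le g}}$, so that $|I|=d-e$. Suppose $p$ of the $d-a_g$ free rows of $R_g$ take columns $(g,j)$ and the rest take extra columns. Then exactly $e+a_g$ all-ones columns pass down, independently of $p$. Your collapse then gives $\sum_p\binom{e}{d-a_g-p}2^pe_p(1,\zeta,\ldots,\zeta^{d-1})=[a_g=|I|]+\theta\,[a_g=0]$, where $[\cdot]$ is the indicator. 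This yields the recursion
\[
 N_g(I,e)=(t-a_g)!\bigl([a_g=|I|]+\theta\,[a_g=0]\bigr)N_{g-1}(I\setminus R_g,e+a_g),\qquad N_1(I,e)=(t-|I|)!.
\]
Take $I\subseteq R_h$ with $h\ge2$. Each block $2\le g<h$ is reached with $e=d$ and no deleted rows, so both $p=0$ and $p=d$ survive and contribute $t!(1+\theta)$. Hence $\lambda_h=\theta^{b-h}(1+\theta)^{h-2}$ and $\lambda_1=\theta^{b-1}$, exactly as in the paper. Your nonvanishing claim therefore needs the extra check $1+\theta=1\pm2^d\neq0$, which holds in characteristic zero. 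With these repairs your argument proves the proposition.
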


\begin{proof}
Work in the square-zero algebra
$\F[z_1,\ldots,z_r]/(z_1^2,\ldots,z_r^2)$ and put
$y_h=\sum_{i\in R_h}z_i$ and $Y_h=y_1+\cdots+y_h$.  The product of the column
forms of $U$ is
\[
 \left(\prod_{h=1}^{b}y_h^{t-d}\right)
 \prod_{h=2}^{b}(Y_{h-1}^d+\theta y_h^d),
 \tag{4.3}\label{eq:column-form-product}
\]
because
\[
 \prod_{j=0}^{d-1}(Y+2\zeta^jy)=Y^d+\theta y^d.
\]
The square-zero relations imply inductively that \eqref{eq:column-form-product}
equals
\[
 \sum_{h=1}^{b}\lambda_hy_h^{t-d}\prod_{g\ne h}y_g^t,
 \tag{4.4}\label{eq:unsaturated-block}
\]
where
\[
 \lambda_1=\theta^{b-1},\qquad
 \lambda_h=\theta^{b-h}(1+\theta)^{h-2}\quad(2\le h\le b).
\]
All these coefficients are nonzero.

For any $r\times q$ matrix $A$ and $|K|=q$, the coefficient of $z_K$ in the
product of its column forms is $\permanent(A_{K,[q]})$.  Since
$y_h^j=j!\sum_{K\subseteq R_h,|K|=j}z_K$, equation
\eqref{eq:unsaturated-block} says that an $(r-d)\times(r-d)$ minor of $U$
vanishes unless the omitted $d$ rows lie in one block $R_h$; in that case it
equals $\lambda_h(t-d)!(t!)^{b-1}$.

Expanding the permanent of \eqref{eq:block-matrix} first along its lower rows
gives
\[
 \permanent(\mathcal B(X))=(s-d)!
 \sum_{\substack{I\subseteq[r],J\subseteq[s]\\|I|=|J|=d}}
 \permanent(U_{I^c,[r-d]})\permanent(X_{I,J}).
\]
The preceding minor identity leaves exactly the terms in
\eqref{eq:block-specialization}.
\end{proof}

\section{The finite gradient slice}\label{sec:slice}

Take parameters as in \Cref{thm:parameters}.  By
\Cref{prop:block-specialization}, the permanent of size
$n=bt+s-d$ has an affine specialization
\[
 \phi_0(X)=B+L(X),\qquad X\in\F^m,\qquad m=bts,
 \tag{5.1}\label{eq:affine-specialization}
\]
whose value is the homogeneous degree-$d$ polynomial
\[
 P(X)=c\sum_{h=1}^{b}\lambda_hM_{t,s,d}(X^{(h)}).
 \tag{5.2}\label{eq:block-polynomial}
\]
The nonzero scalar $c$ is irrelevant.  By
\Cref{cor:block-critical},
\[
 \dim\Crit(P)\le\Delta
 =b\bigl(R_{t,d}+s(d-2)\bigr).
 \tag{5.3}\label{eq:block-critical-delta}
\]
If $k=m-\Delta>0$, \Cref{lem:finite-slice} supplies linear maps
$W:\F^k\to\F^m$ and $Q:\F^m\to\F^k$ such that
\[
 H_0(x)=Q\nabla P(Wx)
 \tag{5.4}\label{eq:H0}
\]
has only the origin over zero, and its zero fiber has length $(d-1)^k$.

Since $P(X)=\permanent_n(B+L(X))$, the chain rule writes
\[
 H_0(x)=QL^{\mathsf T}\nabla_Z\permanent_n(B+L(Wx)).
 \tag{5.5}\label{eq:H0-chain-rule}
\]
This expression is the interface between the geometric construction and an
arbitrary circuit for the permanent.

\section{Formal deformation and pole avoidance}\label{sec:deformation}

Fix a valid rational circuit $C$ for $\permanent_n$ of cost $L_C$.  Apply
\Cref{lem:baur-strassen} and then the free linear maps in
\eqref{eq:H0-chain-rule}.  This gives a rational circuit of cost at most
$3L_C$ for the projected gradient.  Its finitely many divisor inputs are
nonzero rational functions of the full matrix entries.

\begin{lemma}[A transverse direction]\label{lem:transverse}
There is a matrix $A\in\F^{n\times n}$ such that every numerator and
denominator used to represent every divisor input of the differentiated
circuit remains nonzero after the substitution
\[
 Z=B+\varepsilon A+L(Wx).
 \tag{6.1}\label{eq:transverse-family}
\]
\end{lemma}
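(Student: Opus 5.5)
The plan is to reduce the statement to a finite avoidance problem for nonzero polynomials, and then pick $A$ outside a finite union of proper hypersurfaces in $\F^{n\times n}$.

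\textbf{Step 1: list the polynomials to protect.} Write each divisor input of the differentiated circuit as a reduced fraction $g_j/h_j$ with $g_j,h_j\in\F[Z]$ nonzero. There are finitely many, say $j=1,\dots,N$. These come from the circuit produced by \Cref{lem:baur-strassen} for $C$, followed by the free maps of \eqref{eq:H0-chain-rule}. Every denominator appearing in that circuit is also such a polynomial, so all of them go on the same list. It suffices to find $A$ such that each $f\in\{g_j,h_j\}$ satisfies
\[
 f\bigl(B+\varepsilon A+L(Wx)\bigr)\ne0\quad\text{in }\F[\varepsilon,x_1,\ldots,x_k].
\]

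\textbf{Step 2: show a single nonzero $f$ survives for general $A$.} Set $\varepsilon=1$ and $x=0$. The polynomial $f(B+\varepsilon A+L(Wx))$ then specializes to $f(B+A)$. Since $f\ne0$, the function $A\mapsto f(B+A)$ is a nonzero polynomial on $\F^{n\times n}$. So $f(B+A)\ne0$ for $A$ outside the proper hypersurface $V(f(B+\cdot))$. For any such $A$, the polynomial $f(B+\varepsilon A+L(Wx))$ has a nonzero specialization, hence is nonzero.

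Alternatively, to keep track of $\varepsilon$-adic behaviour, expand in $\varepsilon$. The coefficient of the top power $\varepsilon^{\deg f}$ is $f^{\mathrm{top}}(A)$, the leading homogeneous form of $f$. This form is nonzero, so it too defines a proper hypersurface to avoid.

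\textbf{Step 3: choose $A$ for all $f$ at once.} The field $\F$ is algebraically closed, hence infinite. A finite union of proper hypersurfaces therefore does not cover $\F^{n\times n}$. Choose $A$ in the complement of $\bigcup_j V\bigl(g_j(B+\cdot)\,h_j(B+\cdot)\bigr)$. Then every numerator and denominator stays nonzero after the substitution \eqref{eq:transverse-family}, which is the statement.

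\textbf{Main obstacle.} The only delicate point is bookkeeping in Step 1. ``Every numerator and denominator used to represent'' must refer to a fixed finite representation, for instance reduced fractions of the gate values of the specific differentiated circuit. Note also that a gate's value being nonzero as a rational function does not by itself make its representation behave well under the substitution. This is handled by protecting numerator and denominator separately, as above.

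One further care is needed: nested divisions whose inner denominators vanish under the substitution. These cause no trouble, because each gate's value is a single element of $\F(Z)$. The substitution is a ring map from $\F[Z]$ into the domain $\F[\varepsilon,x]$, and it extends to the rational-function field once all the listed denominators map to nonzero elements. With that extension, the whole gate graph is evaluated consistently.
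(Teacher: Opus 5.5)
Your proof is correct and is essentially the paper's argument: reduce to finitely many nonzero polynomials, choose $A$ off a finite union of proper hypersurfaces, and conclude by specialization. The only difference is the witness for non-vanishing. The paper uses the lowest-order Taylor form $g_r$ of $g$ at $B$, so that $g(B+\varepsilon A)=\varepsilon^r g_r(A)+O(\varepsilon^{r+1})$, which also fixes the $\varepsilon$-adic order. You evaluate at $\varepsilon=1$ instead, and either suffices, since later only non-vanishing in the domain $T$ is used.

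One wording fix in your last paragraph: the substitution cannot extend to all of $\F(Z)$, because it has a nonzero kernel (there are only $k+1<n^2$ new variables). It extends only to the localization of $\F[Z]$ at the listed polynomials. That localization is enough, since by induction over the circuit it contains every gate value.
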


\begin{proof}
It is enough to treat finitely many nonzero polynomials $g(Z)$.  Let $g_r$ be
the first nonzero homogeneous Taylor form of $g$ at $B$.  Then
\[
 g(B+\varepsilon A)=\varepsilon^rg_r(A)+O(\varepsilon^{r+1}).
\]
Choose $A$ outside the union of the proper hypersurfaces $g_r(A)=0$.
Consequently $g(B+\varepsilon A)$ is nonzero; adding $L(Wx)$ cannot make the
result identically zero, since specializing $x=0$ recovers this nonzero
series.
\end{proof}

With this choice of $A$, define the polynomial map
\[
 H_\varepsilon(x)=QL^{\mathsf T}\nabla_Z\permanent_n
 (B+\varepsilon A+L(Wx)).
 \tag{6.2}\label{eq:H-epsilon}
\]
It reduces to $H_0$ modulo $\varepsilon$ and is computed by the specialized
rational gradient circuit with at most $3L_C$ nonlinear gates.

\begin{proposition}[Conserved pole-free fiber]\label{prop:pole-free-fiber}
Let
\[
 S=\F[[\varepsilon,y_1,\ldots,y_k]],\qquad
 T=\F[[\varepsilon,x_1,\ldots,x_k]],
\]
and give $T$ an $S$-algebra structure by
$y_i\mapsto H_{\varepsilon,i}(x)$.  Then $T$ is finite free over $S$ of rank
$(d-1)^k$.  If $K=\Frac(S)$, every numerator and denominator in
\Cref{lem:transverse} is a unit in the finite $K$-algebra
$T_K=T\otimes_SK$.
\end{proposition}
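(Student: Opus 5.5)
The plan has two parts: first show that $T$ is finite and free over $S$ of the right rank, then show that every numerator and denominator from \Cref{lem:transverse} becomes invertible in $T_K$.

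\textbf{Finiteness and freeness.} Let $\mathfrak m_S=(\varepsilon,y_1,\ldots,y_k)$. Modulo $\mathfrak m_S$ we have
\[
 T/\mathfrak m_ST=\F[[x]]/(H_{0,1},\ldots,H_{0,k}),
\]
since $H_\varepsilon\equiv H_0 \pmod{\varepsilon}$. By \Cref{lem:finite-slice} this ring has finite length $(d-1)^k$. The map $S\to T$ is a local homomorphism of complete local rings, and the closed fiber is finite. The complete version of Nakayama's lemma (or Weierstrass preparation) then shows that $T$ is finite over $S$. Any monomial basis of the special fiber lifts to generators of $T$ as an $S$-module, so $T$ is generated by $(d-1)^k$ elements.

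Next, $S$ and $T$ are both regular local rings of dimension $k+1$. A finite extension does not drop dimension, so $\dim T=\dim S$. Miracle flatness (a finite local map from a regular local ring to a Cohen--Macaulay local ring of the same dimension is flat) then shows that $T$ is flat. A finite flat module over a local ring is free, and its rank equals the length of the closed fiber, namely $(d-1)^k$.

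\textbf{Units on the generic fiber.} Let $g$ be one of the finitely many nonzero polynomials, and put
\[
 \tilde g=g(B+\varepsilon A+L(Wx))\in T,
\]
which is nonzero by \Cref{lem:transverse}. The argument goes as follows.

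1. Since $T$ is a domain, $\tilde g\neq 0$ in $T$.
2. Because $T$ is free, $T\hookrightarrow T_K$.
3. $T_K$ is a finite field extension's worth of algebra: $T_K\subseteq\Frac(T)$, and it is a $K$-vector space of dimension $(d-1)^k$ that is a domain, hence a field.

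So the nonzero element $\tilde g$ is a unit in $T_K$. Equivalently, one can phrase this as a norm argument: multiplication by $\tilde g$ on the free module $T$ is injective because $T$ is a domain. Hence its determinant $N(\tilde g)\in S$ is nonzero, and by Cayley--Hamilton $\tilde g$ divides $N(\tilde g)$ in $T$. Inverting $N(\tilde g)\in K^\times$ therefore inverts $\tilde g$ in $T_K$. I would present this norm version to match the paper's narrative.

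\textbf{Main obstacle.} The delicate point is the finiteness step. We must make sure that $H_{\varepsilon}$ has entries in $T$ at all: they are polynomials in $\varepsilon$ and $x$, computed as the gradient of a polynomial, so the rational circuit plays no role here. We must also confirm that the closed fiber is exactly the ring appearing in \eqref{eq:special-length}. Everything else is standard commutative algebra.
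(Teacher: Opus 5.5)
Your proposal is correct and follows essentially the paper's route: complete-local finiteness from the Artinian closed fiber, miracle flatness to make $T$ finite free of rank $(d-1)^k$, and the nonzero determinant (norm) of multiplication by the nonzero element $\tilde g$ of the domain $T$ to invert it over $K$. The differences are minor. You get injectivity of $S\to T$ implicitly from freeness of positive rank instead of the paper's Krull-intersection argument. Your alternative remark that $T_K$ is a finite-dimensional domain over $K$, hence a field, is an equally valid shortcut to the unit statement.
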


\begin{proof}
The closed fiber
\[
 T/(\varepsilon,y_1,\ldots,y_k)T
 \cong\F[[x_1,\ldots,x_k]]/(H_{0,1},\ldots,H_{0,k})
\]
has length $(d-1)^k$ by \Cref{lem:finite-slice}.  The complete-local
finiteness criterion therefore makes $T$ finite over $S$
\cite{Stacks0394}.  Concretely, lift a
basis of the closed fiber.  Topological Nakayama gives successive
approximations by its $S$-span, and completeness supplies the limit.

The reduction modulo $\varepsilon$ is the completed local map associated with
the finite polynomial map $H_0$.  Its source and target both have dimension
$k$, so it is dominant and injective on the corresponding power-series rings.
If an element lies in the kernel of $S\to T$, reduction modulo
$\varepsilon$ places it in $(\varepsilon)$; iteration and Krull intersection
give zero.  Thus $S\to T$ is injective.

Both rings are regular local of dimension $k+1$, and $T$ is
Cohen--Macaulay.  Since the closed fiber has dimension zero,
$\dim T=\dim S+\dim(T/\mathfrak m_ST)$, and miracle flatness makes $T$ flat
over $S$ \cite{Stacks00R4}.  Together with the finiteness already proved,
this makes $T$ finite locally free \cite{Stacks02K9}, hence finite free over
the local ring $S$.  Its rank is the length of the closed fiber, namely
$(d-1)^k$.

Let $a\in T$ be any nonzero numerator or denominator supplied by the
transverse-direction lemma.  Since $T$ is a domain, multiplication by $a$ is
injective on the finite free $S$-module $T$.  Its determinant is nonzero in
$S$, and therefore invertible over $K$.  Thus $a$ is a unit in $T_K$.
\end{proof}

\begin{remark}
The perturbation $H_\varepsilon$ may have higher degree in $x$ than its
special fiber $H_0$.  The argument uses conservation of the local finite
length, not a degree bound for the perturbation.  It retains precisely the
branches specializing to the origin; any additional global branches are
irrelevant to the lower bound.
\end{remark}

\section{The rational gate graph and B\'ezout}\label{sec:bezout}

We now prove the parameter theorem.  First we relate the formal generic fiber
to the global polynomial fiber.  The coordinates of $H_\varepsilon$ are
algebraically independent over $\F((\varepsilon))$.  Indeed, a polynomial
relation can be normalized by its least $\varepsilon$-adic valuation and
reduced modulo $\varepsilon$; it would give a nonzero relation among the
coordinates of the finite map $H_0$.  Hence $H_\varepsilon$ is dominant and
generically finite over $\F((\varepsilon))$.

The finite $S$-algebra $T$ is generated by the elements $x_1,\ldots,x_k$.
To see this, let $C=S[x_1,\ldots,x_k]\subseteq T$.  Finiteness of $T$ over
$S$ makes each $x_i$ integral, so $C$ is a finite $S$-module and is therefore
complete.  A finite submodule of the separated finite $S$-module $T$ is
closed in the $\mathfrak m_S$-adic topology.  This topology agrees with the
maximal-ideal topology on $T$, while polynomial truncations in the $x_i$ show
that $C$ is dense in $T$.  Thus $C=T$.
Consequently the finite $K$-algebra $T_K$ in
\Cref{prop:pole-free-fiber} is a length-$(d-1)^k$ quotient of the global
generic fiber of
\[
 H_\varepsilon(x)=y.
 \tag{7.1}\label{eq:generic-fiber}
\]
The global generic fiber is zero-dimensional by the preceding generic
finiteness.  In characteristic zero its function-field extension is
separable, so after extension to an algebraic closure the generic fiber is
reduced.  Its quotient $T_K$ is therefore reduced as well and contributes
exactly $(d-1)^k$ isolated geometric points.

Every divisor input is a unit on this quotient.  Consequently each point of
the corresponding finite scheme lifts uniquely, scheme-theoretically and
gate by gate, to the rational gate graph.  After affine gates are eliminated,
a multiplication gate gives an equation $z-uv=0$, and a division gate gives
$zv-u=0$.  Each is quadratic.  The $k$ equations fixing the output in
\eqref{eq:generic-fiber} are affine.

Since the global generic fiber is zero-dimensional and all divisor inputs are
units on our quotient, the lifted length-$(d-1)^k$ finite scheme lies in the
isolated part of this polynomial gate graph.

There are at most $3L_C$ nonlinear gates by \Cref{lem:baur-strassen}.
Generalized affine B\'ezout bounds the number of isolated points of
this gate graph by the product of the equation degrees
\cite{Heintz1983,BurgisserClausenShokrollahi1997}.  Therefore
\[
 (d-1)^k\le2^{3L_C}.
 \tag{7.2}\label{eq:bezout-comparison}
\]
Taking base-two logarithms gives
\[
 L_C\ge\frac{k}{3}\log_2(d-1),
\]
which proves \Cref{thm:parameters}.

\section{Asymptotic optimization}\label{sec:optimization}

We finish the proof of \Cref{thm:main}.  For sufficiently large $n$, put
\[
 L_1=\log_2n,\qquad L_2=\log_2L_1,
\]
and choose
\[
 \begin{aligned}
 t&=\lfloor L_1-4L_2\rfloor,&
 d&=\left\lfloor\frac{t}{L_2}\right\rfloor,&
 b&=\left\lfloor\frac{n}{2t}\right\rfloor,\\
 r&=bt,& s&=n-r+d.
 \end{aligned}
 \tag{8.1}\label{eq:optimized-parameters}
\]
Then $n=bt+s-d$, and all hypotheses of \Cref{thm:parameters} hold.  Moreover,
\[
 r\in[n/2-t,n/2],\qquad s\ge n/2,
\]
so
\[
 m=rs\ge\frac{n^2}{4}-\frac{nt}{2}.
 \tag{8.2}\label{eq:m-lower}
\]
The elementary bound $R_{t,d}\le2^t-1$ and
$2^t\le n/L_1^4$ give
\[
 \frac{\Delta}{m}
 =\frac{R_{t,d}}{ts}+\frac{d-2}{t}
 \le\frac{2}{tL_1^4}+\frac1{L_2}=o(1).
 \tag{8.3}\label{eq:delta-ratio}
\]
It follows that
\[
 k=m-\Delta=\left(\frac14-o(1)\right)n^2.
 \tag{8.4}\label{eq:k-asymptotic}
\]
Finally,
\[
 \log_2(d-1)
 =\log_2\log_2n-\log_2\log_2\log_2n-O(1)
 =(1-o(1))\log_2\log_2n.
 \tag{8.5}\label{eq:degree-asymptotic}
\]
Substitution of \eqref{eq:k-asymptotic} and
\eqref{eq:degree-asymptotic} into \eqref{eq:parameter-bound} proves
\[
 L_{\rm div}(\permanent_n)
 \ge\left(\frac1{12}-o(1)\right)n^2\log_2\log_2n,
\]
and hence \Cref{thm:main}.

\section{Concluding remarks}

The constant $1/12$ has a transparent origin.  The block specialization
retains $(1/4-o(1))n^2$ variables after the critical-locus loss, while rational
Baur--Strassen contributes the factor $1/3$.  Formal deformation shows that
poles do not reduce the finite geometric multiplicity detected by the
gradient slice.  A larger asymptotic order would require substantially higher
degree together with a quadratic-dimensional finite slice, or an invariant
beyond the single-gradient-fiber architecture.

\subsection*{Acknowledgement}
Codex GPT-5.6-Sol served as an AI assistant in this work.  The author directed
the work and is responsible for the mathematical content.

\bibliographystyle{amsplain}
\bibliography{references}

\end{document}